\title{Another look at the Lady Tasting Tea and differences between permutation tests and randomization tests}
\author{Jesse Hemerik\footnote{Biometris, Wageningen University \& Research}
\footnote{Address for correspondence: Jesse Hemerik, Biometris, Wageningen University \& Research, P.O. Box 16, 6700 AC Wageningen, The Netherlands. e-mail: jesse.hemerik@wur.nl}
 \phantom{.}and
Jelle Goeman\footnote{Biomedical Data Sciences, Leiden University Medical Center, Einthovenweg 20,
2333 ZC Leiden, The Netherlands}
}

\documentclass[11pt]{article}

\usepackage{amsmath}
\usepackage{amsfonts}
\usepackage{bm}
\usepackage{bbm}
\usepackage{amsthm}
\usepackage{comment}
\usepackage[nottoc]{tocbibind}
\usepackage{graphicx}

\usepackage{subcaption}

\usepackage{changepage}
\usepackage{natbib}
\usepackage{abstract}
\usepackage[colorlinks=true,citecolor=blue,runcolor=blue,linkcolor=blue, pdfborder={0 0 0}]{hyperref}

\usepackage{tikz}

\usepackage{algorithm}
\usepackage{algorithmic}

\usepackage{setspace}

\definecolor{darkblue}{rgb}{0.0, 0.0, 0.55}

\theoremstyle{plain}
\newtheorem{theorem}{Theorem}

\theoremstyle{definition}

\usepackage[margin=3cm]{geometry}

\newcommand{\G}{\mathcal{G}}
\newcommand{\W}{\mathcal{W}}
\newcommand{\Y}{\mathcal{Y}}

\sloppy

\begin{document}
\maketitle
\renewcommand{\abstractname}{Summary}

\begin{abstract}
 
\noindent 
The  statistical literature is known to be inconsistent in the use of the terms ``permutation test'' and ``randomization test''. 
Several authors succesfully argue that these terms should be used to refer to two  distinct classes of tests and that there are major conceptual differences between these classes.
The present paper explains an important  difference in mathematical reasoning between these classes: a permutation test fundamentally requires that the set of permutations has a  group structure, in the algebraic sense; the reasoning behind a randomization test is not based on such a group structure and it is possible to use an experimental design that does not correspond to a group.
In particular, we can  use a randomization scheme where the number of possible treatment patterns is larger than in standard experimental designs. This leads to exact \emph{p}-values of improved resolution, providing increased power for very small significance levels, at the cost of decreased power for larger significance levels.
We discuss applications in randomized trials and elsewhere.
Further, we explain that Fisher's famous Lady Tasting Tea experiment, which is commonly referred to as the first permutation test, is in fact a randomization test. This distinction is important to avoid confusion and invalid tests.
\\
\\
\emph{keywords:} Permutation test;  Group invariance test; Randomization  test; Lady Tasting Tea;

\end{abstract}

\section{Introduction}
The statistical literature is  very inconsistent in the use of the terms ``permutation tests'' and ``randomization tests'' \citep{onghena2018randomization,rosenberger2019randomization}. Both terms are often used to refer to tests that involve permutations. Sometimes these two terms are considered to refer to strictly distinct classes, sometimes to the same and sometimes to partly overlapping classes. The confusion surrounding differences between such tests is an important issue, because there are major differences between permutation tests and randomization tests in the sense of e.g. \citet{onghena2018randomization}, \citet{kempthorne1969behaviour} and \citet{rosenberger2019randomization}, whose definitions  we will follow here. Those authors use these terms to refer to strictly distinct classes of tests and discuss the terms in detail.
Permutation tests are based on random sampling from populations and randomization tests are based on experimental randomization of treatments. With ``randomization of treatments'' we refer to the (physical) randomization that is part of the experimental design. With ``permutation-based tests'' we will refer to all tests involving permutations, including randomization tests involving permutations.

Another point of confusion has been the role of a \emph{group structure}, in the algebraic sense, in permutation-based tests.
\citet{southworth2009properties}, among others, explain that for permutation tests to have proven properties, it is important that the set of permutations used has such a group structure, as we discuss in Section \ref{secp}. 
For example, the set of \emph{balanced permutations}, which is a subset of a permutation group, does not have a group structure, and using it within a permutation test tends to lead to a very anti-conservative test. A balanced permutation, roughly speaking, is a permutation map that moves exactly 50 percent of the cases to the control group and 50 percent of the controls to the case group.
Balanced permutations (not to be confused with stratified permutations) have been used in several publications 
\citep{fan2004normalization,jones2007genome} but \citet{southworth2009properties} warn against their use. 

A common reference in the permutation literature is the  ``Lady Tasting Tea" experiment, decribed in \citet[][Ch. II]{fisher1935}. This experiment is commonly referred to as  the first published permutation test \citep{wald1944statistical, hoeffding1952large, anderson2001permutation,  lehmann2005testing,  langsrud2005rotation, mielke2007permutation,  phipson2010permutation, winkler2014permutation}. Indeed, like permutation tests, this test is based on permutations. We will see, however, that it is not a permutation test, but a randomization test, in the sense of \citet{onghena2018randomization}, \citet{rosenberger2019randomization} and   \citet{kempthorne1969behaviour}.

A main goal of the present paper is to explain the difference between two classes of  tests involving permutations (or other transformations): tests that fundamentally rely on a group structure, and tests that do not, in an appropriate sense. To our knowledge this explicit distinction has not been made before. It is connected to the difference between permutation tests and randomization tests: the former fundamentally rely on a group structure and the latter  do not -- with the caveat that a randomization test should reflect the randomization scheme of the (physical) experiment. If the randomization scheme corresponds to a group, then the test also involves that group; otherwise the test does not involve a group.
The fundamental point of this paper is as follows:  the mathematical reasoning underlying a randomization test is not based on a group structure (even if a group happens to be used);  the reasoning underlying a permutation test, on the other hand, is always based on a  group structure and is completely different from the reasoning underlying randomization tests. 
In the existing literature, many randomization tests involve   a group, but randomization tests that do not involve a group are also often considered \citep{onghena1994randomization,onghena2005customization,rosenberger2019randomization}.
The  further contributions of this paper  are as follows.

First of all, since permutation-based randomization tests do not require a group structure, it can be useful to consider a randomization scheme that does not correspond to a group. 
We introduce the  idea of using an alternative  randomization scheme to increase  the number of possible treatment patterns. This increases the resolution of the \emph{p}-value, thus improving  power for very small significance levels $\alpha$, at the price of  power loss for larger $\alpha$.

In addition, this paper provides the caveat that the Lady Tasting Tea experiment is rather different from  permutation tests \citep[in the sense of e.g.][]{onghena2018randomization}.
Referring to the Lady Tasting Tea experiment as an example of a permutation test, as is often done, can put readers on the wrong foot, since the reasoning underlying this experiment is not based on a group structure.
Referring to the Lady Tasting Tea may have contributed to the confusion that has led researchers to design invalid permutation tests without a group structure \citep{southworth2009properties}. The purpose of this paper is not  to identify the first permutation test, which would not be straightforward \citep{berry2014chronicle}.

This paper is built up as follows. In Section \ref{secp} we review existing results on permutation and group invariance tests, empasizing the key role of the group structure of the permutations.
In Section  \ref{secltt} we discuss the Lady Tasting Tea experiment, emphasizing why it does \emph{not} require a group structure to control the type I error rate.
In Section \ref{secgenct} we generalize the test of Section  \ref{secltt}, providing a  general randomization test and mentioning  applications. In Section \ref{highrescc} we apply the general randomization test in a randomized trial setting, discussing how we can obtain higher-resolution \emph{p}-values than with a canonical permutation-based  test. The performance of our alternative  test is illustrated with simulations in Section \ref{secsim}. We end with a discussion.

\section{Permutation tests and group invariance tests} \label{secp}

In the present section we explain the mathematical reasoning behind permutation tests, focusing on type I error control. 
As mentioned, the terms ``permutation test" and ``randomization test" have been used   inconsistently in the literature.
A typical example of a permutation test in the sense of  \citet{onghena2018randomization}, \citet{kempthorne1969behaviour} and \citet{rosenberger2019randomization}  is discussed in \citet[][pp. 58-59]{fisher1936coefficient}. In this thought experiment, measurements of the statures of 100 Englishmen and 100 Frenchmen are considered. 
These observations are assumed to be randomly sampled from their respective populations. Such a  model, where observations are randomly sampled from their populations, is typical for permutation tests in the sense of  for example \citet{kempthorne1969behaviour}, \citet{onghena2018randomization} and \citet{rosenberger2019randomization}. 
Note that in this example, there is no randomization of treatments as in, for example,  clinical trials.
In the example in \citet[][pp. 58-59]{fisher1936coefficient}, to test whether ``the two populations are homogeneous", the difference between the two sample means is computed and this is repeated for each permutation of the 200 observations.
The null hypothesis is rejected if the original difference is larger than most of the differences obtained after permutation.
We will return to this example below.

Permutation tests   are special cases of the general group invariance test. 
The definition of the group invariance test in, for example, \citet{hoeffding1952large}, \citet{lehmann2005testing} and \citet{hemerik2018exact}   is rather general, so that  many  randomization tests also fall under it.
The relationships between permutation tests, randomization tests and group invariance tests are illustrated in Figure \ref{venndiagram}.
The principle underlying the group invariance test can also be used to prove properties of various permutation-based multiple testing methods \citep{westfall1993resampling,tusher2001significance,meinshausen2005lower,hemerik2018false,hemerik2019permutation}.

\begin{figure}  
\centering    
        \begin{tikzpicture}[rounded corners] 

      \draw (0,-0.2) rectangle +(7.4,3.7);
      \draw (3.3,3.75) node[black, text width= 5cm]{Group invariance tests ($\mathbf{G}$)};
      \draw (0.2,0.7) rectangle +(4.2,2.5);
      \draw (2.3,2.1) node[black, text width= 4cm]{Permutation tests ($\mathbf{P}$)};
      \draw (4.8,0.7) rectangle +(4.2,2.5);
       \draw (11.5,2.1) node[black, text width= 4.8cm]{Randomization tests ($\mathbf{R}$)};
       \draw (7.6,1.3) node[black, text width= 4cm]{$\mathbf{R}\cap\mathbf{G}$};
       \draw (9.7,1.3) node[black, text width= 4cm]{$\mathbf{R}\setminus\mathbf{G}$};
       \draw (7.1,0.2) node[black, text width= 4cm]{$\mathbf{G}\setminus(\mathbf{P}\cup\mathbf{R})$};
    \end{tikzpicture}
    \caption{A Venn diagram showing relationships between permutation tests ($\mathbf{P}$), randomization tests ($\mathbf{R}$) and group invariance tests ($\mathbf{G}$). 
Permutation tests ($\mathbf{P}$) are a subclass of group invariance tests ($\mathbf{G}$) and are based on random sampling from populations. Randomization tests ($\mathbf{R}$) are based on randomization of treatments.  The randomization scheme sometimes corresponds to a group ($\mathbf{R}\cap\mathbf{G}$) and sometimes does not ($\mathbf{R}\setminus\mathbf{G}$).  If the scheme corresponds to a group then this is often a set of permutations, but not always.
    There are also group invariance tests that fall outside all these categories ($\mathbf{G}\setminus(\mathbf{P}\cup\mathbf{R})$).
    An example of a test from class $\mathbf{P}$ is Fisher's thought experiment with Englishmen and Frenchmen described in Section \ref{secp}.  An example from the class $\mathbf{R}\cap\mathbf{G}$ is Fisher's Lady Tasting Tea experiment.
  A different example, based on sign-flipping instead of permutation, is the test in \citet[][\S21]{fisher1935}. 
     An example from the class $\mathbf{R}\setminus\mathbf{G}$ is in Section \ref{highrescc}. An example from the class $\mathbf{G}\setminus(\mathbf{P}\cup\mathbf{R})$ are certain tests based  on rotations \citep{solari2014rotation} or sign-flipping \citep{winkler2014permutation}.} \label{venndiagram}
\end{figure}
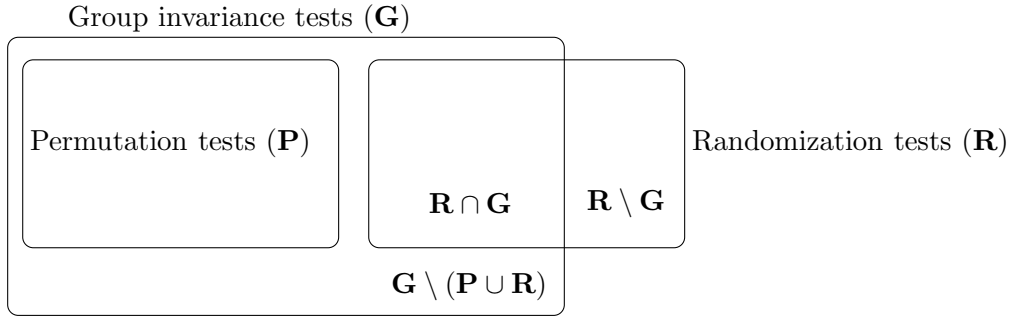 

\color{black}
A general definition of a group invariance test is as follows.
Generalizations of this framework, such as two-sided tests, are possible.
Let $X$ model random data with support $\mathcal{X}$. For example, $X$ could be a random vector or matrix. Consider a set $\G$ of permutation maps or other transformations $g:\mathcal{X}\rightarrow \mathcal{X}$.  We will assume that $\G$ is finite, although generalizations are possible. The set $\G$ is assumed to have a group structure with respect to the operation of composition of maps, which means that:  $\G$ contains the identity map $x\mapsto x$; every element in $\G$ has an inverse; and for all $g$, $h\in \G$, $g\circ h\in \G$ \citep{hoeffding1952large}.
Further, we consider some test statistic $T:\mathcal{X}\rightarrow \mathbb{R}$.
Consider a null hypothesis  $H_0$ that implies that the joint distribution of all test statistics $T(g(X))$ with $g\in \G$ is invariant under all transformations  of $X$ in $\G$ \citep{hemerik2018exact}. This holds in particular if the data $X$ are themselves transformation-invariant, i.e.,  if
\begin{equation} \label{Ginvariance}
g(X)\,{\buildrel d \over =}\,X  
\end{equation}
 for every $g\in \G$.

A typical example of such a setting is the thought experiment from \citet[][pp. 58-59]{fisher1936coefficient}, mentioned above.
Let $X_1,...,X_{100}$ be the statures of the Englishmen and let  $X_{101},...,X_{200}$ be the statures of the Frenchmen.
The test statistic considered in  \citet[][pp. 58-59]{fisher1936coefficient} is
\begin{equation}
T(X)=\Big|\frac{1}{100}\sum_{i=1}^{100} X_i - \frac{1}{100} \sum_{i=101}^{200} X_i\Big|.   \label{tst}
\end{equation} 
The  null hypothesis $H_0$ is that $X_1,...,X_{200}$ are i.i.d..  The null hypothesis is rejected if the original test statistic is larger than most of the statistics obtained after permutation.  
(In practice one would nowadays sample  random permutations, to avoid having to use all possible permutations, see \citealp{hemerik2018exact}.)
The group $\G$ that Fisher considers consists of all permutation maps
 $g: \mathbb{R}^{200} \rightarrow \mathbb{R}^{200}$. Here, every $g\in \G$ is of the form 
 $$(x_1,...,x_{200})\mapsto (x_{\pi_1},..., x_{\pi_{200}}  ),$$
  where $(\pi_1,...,\pi_{200})$ is a permutation of $(1,...,200)$.
 Note that  $X$ is then $\G$-invariant under $H_0$, i.e.,   \eqref{Ginvariance} holds for every $g\in \G$.
 
 As another example, consider random data $X$ with support in $\mathbb{R}^{n}$, with independent entries that are symmetric about their means. Suppose that under $H_0$, the entries have mean 0. This may for example be the case if each entry of $X$ is the difference of two paired observations. 
 Then the distribution of $X$ is invariant under all transformations in $\G$ under $H_0$ if we define $\G$ to be the group of all sign-flipping maps of the form 
\begin{equation} \label{signflip}
 (x_1,...,x_n)\mapsto (s_1x_1,...,s_nx_n), 
\end{equation}
 with $(s_1,...,s_n)\in \{-1,1\}^n$. We may take $T((x_1,...,x_n))=\sum_{i=1}^n x_i$.
 If the data generating mechanism involves randomization of treatments (rather than sampling from populations), such a test can be considered a randomization test.
The test already appears in \citet[][\S21]{fisher1935}, albeit without explicit proof. See also \citet{basu1980randomization}.
 
In  both examples above, we can apply the general group invariance test to test $H_0$. This test already appears in the literature \citep{hoeffding1952large,lehmann2005testing, hemerik2018exact}, but for completeness we include the result and its proof.  

We will write $gX=g(X)$ for short. Let $k=\lceil (1-\alpha)|\G| \rceil$, the smallest integer which is larger than or equal to $(1-\alpha)|\G|$. Let 
$T^{(1)}(X)\leq  T^{(2)}(X)    \leq ... \leq  T^{(k)}(X)    \leq  ... \leq T^{(|\G|)}(X)$
 be the sorted values $T(gX)$ with  $g\in \G$.

 \begin{theorem} \label{basic}
The size of the group invariance test is at most $\alpha$, i.e., under $H_0$, $\mathbb{P}\big\{T(X)>T^{(k)}(X)\big\}\leq \alpha.$
\end{theorem}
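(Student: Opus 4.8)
The plan is to combine an elementary deterministic counting bound with the distributional invariance imposed by $H_0$, the passage between the two being exactly where the group structure of $\G$ is used.

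First I would record a purely combinatorial fact, valid for every fixed realisation $x\in\mathcal{X}$: among the $|\G|$ numbers $T(gx)$, $g\in\G$, at most $|\G|-k$ of them are strictly larger than $T^{(k)}(x)$, since each such number must occupy one of the ranks $k+1,\dots,|\G|$ in the sorted list $T^{(1)}(x)\le\cdots\le T^{(|\G|)}(x)$ (ties at the value $T^{(k)}(x)$ only make this count smaller). Hence, pointwise in $x$,
\begin{equation*}
\sum_{h\in\G}\mathbf{1}\{T(hx)>T^{(k)}(x)\}\ \le\ |\G|-k\ \le\ \alpha|\G|,
\end{equation*}
where the last inequality uses $k=\lceil(1-\alpha)|\G|\rceil\ge(1-\alpha)|\G|$. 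Applying this with $x=X$ and taking expectations gives $\sum_{h\in\G}\mathbb{P}\{T(hX)>T^{(k)}(X)\}\le\alpha|\G|$.

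Next I would show that every term in this sum equals $\mathbb{P}\{T(X)>T^{(k)}(X)\}$, which is where the group axioms are essential. For fixed $h\in\G$ the map $g\mapsto g\circ h$ is a bijection of $\G$ onto itself: closure sends it into $\G$, and existence of inverses makes it onto. Consequently the multiset $\{T(g(hX)):g\in\G\}$ coincides with $\{T(gX):g\in\G\}$, so $T^{(k)}(hX)=T^{(k)}(X)$ identically. Writing $W=(T(g(hX)))_{g\in\G}$ and $V=(T(gX))_{g\in\G}$, the pair $(T(hX),T^{(k)}(X))$ is obtained from $W$ by the fixed measurable map ``take the coordinate indexed by the identity, and the $k$-th order statistic''; the same map applied to $V$ returns $(T(X),T^{(k)}(X))$. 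Since $H_0$ makes the joint law of $W$ equal to that of $V$, we get $(T(hX),T^{(k)}(X))\,{\buildrel d \over =}\,(T(X),T^{(k)}(X))$, and in particular $\mathbb{P}\{T(hX)>T^{(k)}(X)\}=\mathbb{P}\{T(X)>T^{(k)}(X)\}$.

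Combining the two steps, $|\G|\cdot\mathbb{P}\{T(X)>T^{(k)}(X)\}\le\alpha|\G|$, and dividing by $|\G|$ finishes the proof. There is no serious obstacle here; the only points requiring care are the treatment of ties in the counting step and, more importantly, the verification that $g\mapsto g\circ h$ permutes $\G$. This last point is precisely where a mere subset of a permutation group (such as the balanced permutations) would break the argument, which is why the theorem genuinely needs a group rather than an arbitrary set of transformations.
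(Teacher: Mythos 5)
Your proof is correct and takes essentially the same route as the paper's: the same three ingredients --- the identity $T^{(k)}(hX)=T^{(k)}(X)$ coming from the group structure, the distributional invariance under $H_0$, and the deterministic count of at most $|\G|-k\le\alpha|\G|$ strict exceedances of $T^{(k)}(X)$ --- play the same roles in both arguments. The only difference is presentational: the paper phrases your sum over all $h\in\G$ as an average over a uniformly random $G$ drawn from $\G$.
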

\begin{proof}
By the group structure, $\G g=\G$ for all $g\in \G$. Hence $T^{(k)}(gX)=T^{(k)}(X)$ for all $g\in \G$.
Let $G$ have the uniform distribution on $\G$. Then under $H_0$, the rejection probability is
$$
\begin{aligned}
&\mathbb{P}\big\{T(X)>T^{(k)}(X)\big\}=\\
&\mathbb{P}\big\{T(GX)>T^{(k)}(GX)\big\}=\\
&\mathbb{P}\big\{T(GX)>T^{(k)}(X)\big\}.
\end{aligned}
$$
The first equality follows from the null hypothesis and the second equality holds since $T^{(k)}(X)=T^{(k)}(GX)$. 
Since $G$ is uniform on $\G$, the above probability equals
$$\mathbb{E} \Big [|\G|^{-1} \cdot \big|\big \{g\in \G: T(gX)> T^{(k)}(X) \big \}\big| \Big ] \leq \alpha,$$ 
as was to be shown.
\end{proof}

Under additional assumptions, the test is exact, i.e., the rejection probability is exactly $\alpha$ under $H_0$ \citep{hemerik2018exact}.
In the above proof we used the group structure, which guarantees the symmetry property $\G g=\G$ for all $g\in \G$.  A different proof, based on conditioning on the pooled sample, is also possible and also requires using this symmetry property \citep[first proof of Theorem 1 in][]{hemerik2018exact}.

Write $\G X=\{gX:g\in \G\}$ and assume for convenience that $gX$ and $g'X$ are distinct with probability 1 if  $g,g'\in \G$ are distinct. This is usually the case if $X$ is continous.
The permutation  test is based on the fact that under $H_0$, for every permutation $g\in \G$ the probability  $\mathbb{P}\{T(gX)>T^{(k)}(X)\}$ is the same. 
The reason is that under $H_0$, for every $g\in \G$, the joint distribution of $(gX,\G X)$ is the same.
This is  because if $g$, $g'\in \G$, under $H_0$ we have  $$(gX,\G X)= (gX,\G gX)   \,{\buildrel d \over =}\,  (X,\G X) \,{\buildrel d \over =}\, (g'X,\G g'X) = (g'X,\G X). $$
When $\G g=\G$ does not hold for all $g\in \G$, then the above does not generally hold under $H_0$.

The group structure of $\G$ implies  that $\G g=\G$ for all $g\in \G$.  Under the mild condition that  all $g\in \G$ are surjective, the reverse implication also holds, i.e., if $\G g=G$ for all $g\in \G$, then $\G$ is a group. For example, if $\G g=\G$ for all $g\in \G$, there are   $h,g\in \G$ with $hg=g$. It follows that $\G$ contains an identity element, and the other group properties also easily follow. We conclude that in the argument underlying the permutation test, the group structure is key. 

In practice it is often computationally infeasible to use a test based on the full group $\G$ of transformations.  Researchers then usually resort to using a limited number of random transformations, uniformly sampled from the group $\G$. It is then still possible to obtain an exact test (see \citealp{hemerik2018exact} and \citealp{phipson2010permutation} for a detailed treatment).

\section{The Lady Tasting Tea and randomization tests}

Unlike a permutation test, a randomization test is based on data collected in an experiment involving randomization of treatments. The  randomization scheme of the physical experiment does not necessarily correspond to a group, and if it does not, the statistical test does not involve a group either.
Even if a group is used, the reasoning underlying a randomization test is not based on the group structure and is very different from the reasoning underlying permutation tests.

In this section, we first discuss the Lady Tasting Tea experiment, explaining that the reasoning underlying the test is not based on a group structure, because it is based on randomization.
This experiment is a special case of a general randomization test that we discuss in Section \ref{secgenct}. In Section \ref{highrescc}  we apply this test to provide higher-resolution 
 \emph{p}-values in randomized trials.

\subsection{The Lady Tasting Tea experiment}  \label{secltt}
In the Lady Tasting Tea experiment \citep[][Ch. II]{fisher1935},  the null  hypothesis  is that a particular lady cannot distinguish between two types of cups of tea with milk: cups in which the tea was added first and cups in which the milk was added first. To test the null hypothesis, which we  denote by $H_0$, the experimenter ``mixes eight cups of tea, four in one way and four in the other," and presents them ``to the subject for judgment in a random order." The experimental setup is  made known to the lady. 
The lady then  tastes from the cups and has to determine which four cups in the sequence of eight had milk added first. 
Fisher actually performed the  experiment \citep{box1978ra, berry2014chronicle}.

There are $\binom{8}{4}=70$ possible orders, with respect to the two types of cups.
Suppose $H_0$ is true. If the lady guesses every pattern with probability $1/70$, then the probability that she chooses the correct order is $1/70$. Even if she has an a priori preference for a certain order, the probability of guessing correct is $1/70$. Indeed, it is assumed that   the researcher randomizes the true pattern, i.e., he chooses each pattern with equal probability.
Thus, if we reject $H_0$ when the lady identifies all four ``milk first" cups correctly, then the probability of a type I error is $1/70$ ($1.4\%$). The probability that she labels three of the ``milk first" cups correctly is
$\binom{4}{3} \binom{4}{1}/70=16/70$ ($22.9\%$) and the probability of two correct picks is $36/70$ ($51.4\%$). Thus, for example,  when we reject $H_0$ if at least three picks are correct, the level is $16/70+1/70=17/70$ ($24.3\%$). The test is equivalent to an instance of ``Fisher's exact test" \citep{yates1934contingency, fisher1935logic, berry2014chronicle} with pre-fixed marginal frequencies in the $2\times 2$ table. Fisher's exact test, however, was not originally motivated from a permutation or randomization testing perspective. 


Mathematically, we can describe the experiment as follows.
Let $\W\subset\{0,1\}^8$ be the set of vectors containing four 0's and four 1's, so that the cardinality of $\W$ is $R:=|\W|=70$.
Let  the decision of the lady be denoted by $Y$ and let $W$ denote the true order, i.e., the random decision by the experimenter. Here, $Y$ and $W$ are random variables taking values in $\W$. Note that $W$ represents the `treatments' given by the experimenter  and $Y$ respresents the lady's `responses'.
The experimenter's order $W$ is assumed to be uniformly distributed on $\W$. The null hypothesis is 
$$
H_0: \text{ } Y \text{ is independent of } W.
$$
Let $\alpha\in(0,1)$ be the  desired type I error rate. If $\alpha\in A=\{1/70,17/70,53/70,69/70\}$, then $\alpha$ is called \emph{attainable} in the Lady Tasting Tea experiment, meaning that we obtain a test of exactly level $\alpha$ \citep{pesarin2015some}. If $\alpha$ is not attainable, then we obain a test with level strictly less than $\alpha$.

Let $T:  \W\times \W \rightarrow \mathbb{R}$ be a test statistic such that high values of $T(w,y)$ indicate that the patterns $w$ and $y$ are similar, i.e., that there is evidence against $H_0$. 
Let $$T^{(1)}(Y) \leq ... \leq T^{(70)}(Y)$$ be the sorted  statistics $T(w,Y)$ with $w\in \W$. 
Whether the vector of sorted  statistics $(T^{(1)},...,T^{(70)})$ actually depends on $Y$ or not, depends on the definition of $T$; in \citet{fisher1935}, the test statistic is 
\begin{equation} \label{Tfisher}
T(W,Y)=\sum_{i=1}^8 \mathbbm{1}_{\{W_i=1\}\cap \{Y_i=1\} } = \sum_{i=1}^8 W_i Y_i
\end{equation}
and it can be seen the sorted statistics do not depend on $Y$, since $Y$ always has four entries that are $1$.
Let $\lceil (1-\alpha)R \rceil $ be the smallest integer that is at least $(1-\alpha)R$. 
We have the following result \citep{fisher1935}.

\begin{theorem} \label{ltttest}
The test that rejects $H_0$ if and only if 
$   T(W,Y)> T^{(\lceil (1-\alpha)R \rceil)}  $ has size as most $\alpha$.
\end{theorem}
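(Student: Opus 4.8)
The plan is to mimic the structure of the proof of Theorem~\ref{basic}, but replacing the group-theoretic symmetry $\G g = \G$ with the symmetry supplied by the \emph{randomization} of $W$: the fact that $W$ is uniform on $\W$ and independent of $Y$ under $H_0$. First I would observe that, unlike in the permutation-test setting, the test statistic is evaluated at the \emph{data pair} $(W,Y)$, where $W$ is the physically randomized treatment pattern and $Y$ is the lady's (possibly adversarial) response. The key structural fact is that conditionally on $Y$, the sorted values $T^{(1)}(Y) \le \dots \le T^{(R)}(Y)$ are a fixed (non-random) vector, so the threshold $T^{(\lceil(1-\alpha)R\rceil)}(Y)$ is $Y$-measurable. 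Hence I can condition on $Y$ throughout.

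Next I would carry out the conditional argument. Fix a value $y$ of $Y$. Under $H_0$, $W$ is independent of $Y$ and uniform on $\W$, so conditionally on $Y=y$ the pattern $W$ is still uniform on $\W$. Therefore
\[
\mathbb{P}\big\{T(W,Y) > T^{(\lceil(1-\alpha)R\rceil)}(Y) \,\big|\, Y=y\big\}
= \frac{\big|\{w\in\W : T(w,y) > T^{(\lceil(1-\alpha)R\rceil)}(y)\}\big|}{R}.
\]
Since $T^{(\lceil(1-\alpha)R\rceil)}(y)$ is, by definition, the $\lceil(1-\alpha)R\rceil$-th smallest among the $R$ numbers $\{T(w,y):w\in\W\}$, at most $R - \lceil(1-\alpha)R\rceil$ of those numbers strictly exceed it. Because $\lceil(1-\alpha)R\rceil \ge (1-\alpha)R$, we get $R - \lceil(1-\alpha)R\rceil \le \alpha R$, so the conditional rejection probability is at most $\alpha R / R = \alpha$.

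Finally I would remove the conditioning: taking the expectation over $Y$ of a quantity that is bounded above by $\alpha$ for every realization of $Y$ gives $\mathbb{P}\{T(W,Y) > T^{(\lceil(1-\alpha)R\rceil)}(Y)\} \le \alpha$, which is the claim. I would then add a sentence emphasizing the conceptual point the paper is making: nowhere in this argument did we use that $\W$ (or any associated set of maps) forms a group — the only symmetry invoked is the experimenter's uniform randomization of $W$ together with its independence from $Y$ under $H_0$. The main obstacle, if any, is purely expository rather than technical: being careful that the threshold genuinely depends only on $Y$ and not on $W$ (so that conditioning on $Y$ legitimately fixes it), and — for the reader's benefit — noting that in Fisher's specific statistic \eqref{Tfisher} the sorted vector does not even depend on $Y$, which makes the bound transparent but is not needed for the general statement.
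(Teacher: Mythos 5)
Your proposal is correct and follows essentially the same route as the paper's own proof: condition on $Y$, use that under $H_0$ the randomized $W$ remains uniform on $\W$ so the conditional rejection probability is a counting fraction bounded by $\alpha$, then take the expectation over $Y$. The extra detail you supply (the explicit bound $R - \lceil(1-\alpha)R\rceil \le \alpha R$) is a welcome elaboration of a step the paper leaves implicit.
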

\begin{proof}
Assume  $H_0$ holds.
Conditional on $Y$, $W$ is uniformly distributed on $\W$ and $T^{(\lceil (1-\alpha)R \rceil) }(Y)$ is known. Hence, conditional on $Y$,   the rejection probability is
$$\mathbb{P}\big( W\in \{ w\in\W:     T(w,Y)> T^{(\lceil (1-\alpha)R \rceil)}(Y)   \}  \big )=$$  
$$\frac{1}{R}|\{  w\  \in\W:     T(w,Y)> T^{(\lceil (1-\alpha)R \rceil)}(Y) \}  |\leq \alpha.$$
Thus, marginal over  $Y$, the rejection probability is also at most $\alpha$.
\end{proof}

Observe that when we use the test statistic \eqref{Tfisher}, then taking $\alpha\in A$ indeed results in an exact test.  This follows from the fact that
$$
T^{(1)}< T^{(2)}=T^{(3)}=...=   T^{(17)}<   T^{(18)}=T^{(19)}=...
=   T^{(53)} <  T^{(54)} =...=  T^{(69)}  < T^{(70)},
$$
by the argument at the beginning of this section.
If $\alpha\in(0,1)\setminus A$, the level is strictly smaller than $\alpha$.
If the experimenter does not choose randomly from all 70 possible patterns, but uses some smaller set of patterns for him and the lady to choose from, then there may not  be any $\alpha\in(0,1)$ for which the test is exact, since the sorted test statistics may depend on $Y$. This is one of the reasons why using the full set of patterns, in combination with a suitable test statistic $T$, is useful.
However, to prove  Theorem \ref{ltttest}, we did not need to use the group structure of the permutations. The reason is that in the Lady Tasting Tea experiment, under $H_0$ the randomization $W$ of the researcher is by design independent of the reference set $\{(w,Y): w\in \W\}$. Further considerations follow below.

\subsection{A general randomization test} \label{secgenct}

Theorem \ref{ltttest} still applies if the researcher uses a set of permutations that does not correspond to a group. Suppose for example that the researcher omits one of the patterns, thus picking randomly from some set $\W$ of 69 patterns, with or without the lady's knowledge. Denote the set that the lady chooses from by $\Y$. Then $W$ and  $Y$ will still be independent and Theorem \ref{ltttest} still applies if we let  $R=|\W|=69$ and let $$T^{(1)}(W) \leq ... \leq T^{(69)}(W)$$ be the sorted test statistics $T(w,Y)$, $w\in \W$.
Indeed, conditional on $Y$, $W$ will  have a uniform distribution on $\W$.
In fact, we have the following very general randomization test, of which the Lady Tasting Tea experiment is a special case.
We refer to this  result as a randomization test since in most applications of the theorem, the variable $W$ will  represent experimental randomization of treatments \citep{kempthorne1969behaviour, onghena2018randomization}. 
The idea of the theorem  is certainly not new; it is at least  implicitly present in earlier works \citep{morgan2012rerandomization}.

\begin{theorem}[General randomization test] \label{dtest}
Let $\W$ and  $\Y$ be  nonempty sets, where $\W$ is assumed to be finite.  Write $R=|\W|$.
Let $Y$ be a variable taking values in $\Y$ and assume $W$ is uniformly distributed on $\W$. (Here $Y$ and $W$ are variables in a general sense, e.g. they may be random vectors.)
Let $T: \W\times \Y\rightarrow \mathbb{R}$ be some test statistic. 
Consider a null hypothesis $H_0$ that implies that $Y$ is independent of $W$.
Let  $T^{(1)}(Y) \leq ... \leq T^{(R)}(Y)$ be the sorted values $T(w,Y)$ with $w\in\W$.  
Consider the test that rejects $H_0$ if and only if 
$   T(W,Y)> T^{(\lceil (1-\alpha)R \rceil)} (Y) $.
Then the result of Theorem \ref{ltttest} still applies, i.e., the  test has size at most $\alpha$.
\end{theorem}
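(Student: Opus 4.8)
The plan is to mimic the proof of Theorem~\ref{ltttest} almost verbatim, the only new ingredient being that $\W$ and $\Y$ are now arbitrary sets rather than the specific pattern sets of the Lady Tasting Tea. The essential move is to condition on $Y$ and exploit that, under $H_0$, the randomization $W$ is independent of $Y$. Concretely, I would first fix a value $y\in\Y$ (in the general measurable setting, condition on the $\sigma$-algebra generated by $Y$). Under $H_0$, independence of $W$ and $Y$ means that the conditional distribution of $W$ given $Y=y$ is still uniform on $\W$. Moreover the rejection threshold $T^{(\lceil (1-\alpha)R\rceil)}(Y)$ is a deterministic function of $Y$ alone, hence a constant once we have conditioned on $Y=y$. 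This is precisely the point at which $H_0$ enters, and the reason no group structure on $\W$ is needed.

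Next I would bound the conditional rejection probability. Given $Y=y$, by uniformity of $W$ on $\W$ it equals
$$
\frac{1}{R}\,\big|\{w\in\W:\ T(w,y)>T^{(\lceil (1-\alpha)R\rceil)}(y)\}\big|.
$$
The elementary combinatorial observation is that at most $R-\lceil (1-\alpha)R\rceil$ of the $R$ sorted values $T^{(1)}(y)\le\cdots\le T^{(R)}(y)$ can be strictly larger than the $\lceil (1-\alpha)R\rceil$-th of them (possibly fewer, in case of ties), so the cardinality above is at most $R-\lceil (1-\alpha)R\rceil\le R-(1-\alpha)R=\alpha R$. Hence the conditional rejection probability is at most $\alpha$, and this holds for every $y$.

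Finally I would remove the conditioning: since the rejection probability conditional on $Y$ is almost surely at most $\alpha$, taking the expectation over $Y$ (the tower property) gives that the unconditional probability $\mathbb{P}\{T(W,Y)>T^{(\lceil (1-\alpha)R\rceil)}(Y)\}$ is also at most $\alpha$, which is the claim. I do not anticipate a real obstacle here; the argument is short, and the only places needing a little care are (i) that conditioning on $Y$ leaves $W$ uniform \emph{because} of the independence assumed under $H_0$, and (ii) the counting bound $R-\lceil (1-\alpha)R\rceil\le\alpha R$. It is worth stating explicitly, by contrast with the proof of Theorem~\ref{basic}, that nowhere do we use composition of maps or an invariance of the form $\G g=\G$: the entire force of the argument comes from the designed independence of $W$ from the reference set $\{(w,Y):w\in\W\}$.
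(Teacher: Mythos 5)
Your proof is correct and follows essentially the same route as the paper's: condition on $Y$, use the independence implied by $H_0$ to keep $W$ uniform on $\W$, bound the conditional rejection probability by $\big(R-\lceil(1-\alpha)R\rceil\big)/R\le\alpha$, and integrate over $Y$. The only difference is that you spell out the counting bound explicitly, which the paper leaves implicit.
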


The proof is analogous to that of Theorem \ref{ltttest}: under $H_0$,
conditional on $Y$, $W$ is uniformly distributed on  $\W$ and $T^{(\lceil (1-\alpha)R \rceil) }(Y)$ is known. Hence, conditional on $Y$,   the rejection probability is
$$\mathbb{P}\big( W\in \{ w\in \W:     T(w,Y)> T^{(\lceil (1-\alpha)R \rceil)}(Y)   \}  \big ) \leq \alpha$$ as before.

Under additional assumptions, the test is exact, i.e., the rejection probability is exactly $\alpha$ under $H_0$.
We assumed that $\W$ is finite, but generalizations to infinite $\W$  are possible, as well as generalizations to non-uniform $W$. We can also define a two-sided test. Moreover, as in Section \ref{secp}, under straightforward additional assumptions, we can prove that the test of Theorem \ref{dtest} is exact for certain $\alpha$, i.e., that the rejection probability is exactly $\alpha$ under $H_0$ \citep{hemerik2018exact}.

Note that in Theorem \ref{dtest}, $Y$ might be a constant, conditional on $W$. In fact, the randomization testing literature often views the outcomes as non-random, conditional on the treatments.  This corresponds to the fact that  randomization tests can be used without an assumption that  the responses are  randomly sampled from populations \citep{cox2009randomization, onghena2018randomization, rosenberger2019randomization}. We discuss this further in the context of randomized trials in Section \ref{highrescc}.

The  general randomization test of Theorem \ref{dtest} has many applications. 
Examples are agricultural experiments and randomized clinical trials. Randomized trials will be discussed in Section \ref{highrescc}. We mention a few other interesting applications here.

First of all, Theorem \ref{dtest} has implications for the  Lady Tasting Tea experiment. In Section \ref{secltt},
 it is assumed that the lady knows beforehand that there are $m$ cups of each type, where $2m$ is the total number of cups she receives.
If for some  reason she does not know that,  then she might label e.g. $m+1$ of the $2m$ items with the same label. In other words, she might pick a pattern from a set containing more patterns than the experimenter chooses from. Theorem \ref{dtest} then says that the type I error probability will nevertheless be at most $\alpha$ under $H_0$. Indeed, in Theorem \ref{dtest}, $\Y$ is allowed to be any nonempty set, so in particular it can be larger than $\W$. 

A further application of  Theorem \ref{dtest} are general sensory tests, of which the Lady Tasting Tea experiment is an example. It is interesting to note that in the  literature on sensory tests, Fisher's experiment has been regarded a ``forerunner of modern sensory analysis" \citep{bi2015revisiting}. For example,  \citet{harris1949measurement} perform a sensory experiment that is analogous to the Lady Tasting Tea experiment, as follows: ``The glasses are arranged at random. The subject is told that four of  them contain the substance and four contain water, and he is asked to taste them all and to separate them into the two groups of four.''

Another application of  Theorem \ref{dtest} are  existing permutation-based randomization tests that are used to evaluate whether some classification algorithm has any predictive ability (both in-sample and out-of-sample).  Such tests can be used to evaluate algorithms for, for example, 
text categorization, fraud detection, optical character recognition and medical diagnosis. Tests of this type are discussed in, for instance, \citet{golland2005permutation}, \citet{airola2010applying}, \citet{ojala2010permutation}, \citet{schreiber2013statistical} and \citet{rosenblatt2019better}.

\subsection{Randomization testing without a group structure: higher-resolution \emph{p}-values} \label{highrescc} 

In randomized trials, often we are interested in comparing two different treatments, for example a drug and a placebo. In such a setting, there is a treatment assignment randomized by the experimenter. In that case, we can use the randomization test of Theorem \ref{dtest}, as explained below. As discussed, we then do not require a group structure, to control the type I error rate. 
We now discuss such a setting in detail.
The tests considered here  will also be studied with simulations in Section \ref{secsim}.

Let  $n\geq 2$ be an integer, assumed even for convenience, and suppose we have  $n$ subjects, $n/2$ of which receive one treatment and $n/2$ of which receive the other treatment. Let $W=(W_1,..,W_n)$ denote the treatments and $Y=(Y_1,...,Y_n)$ the responses, taking values in $\mathbb{R}^n$.   The treatment pattern $W$ is uniformly sampled from a set $ \mathcal{W}\subseteq \{0,1\}^n$.
The most  common type of randomized trial is the \emph{forced balance procedure},  where
\begin{equation} \label{refset}
 \mathcal{W}= \{w\in \{0,1\}^n: \text{ } w \text{ contains } n/2 \text{ 1's}\}
\end{equation}
\citep{rosenberger2015randomization,lachin1988statistical,braun2001optimal}. 
For each $1\leq i \leq n$, the response $Y_i\in \mathbb{R}$ is independent of all the other subjects' treatments and responses. 
We consider the  null hypothesis $H_0$ that $Y$ is independent of $W$.

These assumptions are still rather general. It can be useful to consider a more specific randomization model as in \citet[][\S7]{pitman1937significance}, who assumes an additive treatment effect. 
An important  property of randomization models,  is that to test whether the treatment has an effect on our particular patients,   we do not  need to assume that they are random draws from  populations. We could consider the patients as fixed and $Y$ as constant, conditional on $W$ \citep[][\S7]{pitman1937significance}. Indeed, ``Any assumption that the units are, say, a random sample from a population of units [...] is additional to the specification" of the model \citep{cox2009randomization}. 
This property is discussed in detail in  \citet{onghena2018randomization} and  \citet{rosenberger2019randomization}.

We can invoke Theorem \ref{dtest} to obtain a test that controls the type I error rate.  
We can also obtain an exact test, i.e., a test that rejects with probability exactly $\alpha$ under $H_0$. Consider  the test statistic $T: \mathcal{W}\times \mathbb{R}^n\rightarrow \mathbb{R}$ defined as
 \begin{equation} \label{te1}
T(W,Y)=\sum_{\{i: W_i=1\}}Y_i - \sum_{\{i: W_i=0\}}Y_i.
\end{equation}
Recall that $Y$ may be viewed as random or constant, conditional on $W$. In either case, assume that $Y$ is such that (with probability $1$), for all distinct $w_1, w_2\in \mathcal{W}$,  $T(w_1,Y)\neq T(w_2,Y)$. This is satisfied in particular if  $Y_1,...,Y_n$ have continuous distributions. Write $N=\binom{n}{n/2}.$
The test is exact if $\alpha\in(0,1)$  is a multiple of $1/|\mathcal{W}|$, where $|\mathcal{W}|$ equals 
$N.$
An exact \emph{p}-value is
\begin{equation} \label{pvrtest}
p(W,Y)=\frac{|\{w\in \mathcal{W}: T(w,Y)\geq T(W,Y) \}|  }{ |\mathcal{W}|},
\end{equation}
i.e., if $\alpha\in(0,1)$ is  a multiple of $1/|\mathcal{W}|$, then $\mathbb{P}(p\leq \alpha)=\alpha$ under $H_0$. 
A two-sided exact test can be obtained analogously.

Since Theorem \ref{dtest} applies,  the test essentially does not rely on a group structure. Hence, we may consider taking $\W$ to be a set that does not correspond to a group. 
In a different context, this is also done in \citet{onghena1994randomization} and \citet{onghena2005customization}, where a set of permutations is used that is strictly smaller than the full set of permutations. 
This is done to avoid too repetitive treatment patterns such as ABBBBAAA.
In our setting, if $n=8$, instead of taking $\mathcal{W}$ to be the set of all permutations of $(0,0,0,0,1,1,1,1)$ we could take $\W$ to be a  subset  that does not correspond to a group, and still obtain an exact test (for certain $\alpha$). 
As  \citet{onghena1994randomization} illustrates, this may be useful in some settings.  However, in a typical randomized trial there is no evident reason to only use a subset of the permutations, except to limit the number of permutations for computational reasons.

A  more interesting alternative, when running the experiment, is to draw $w$ from a set that is strictly larger than the set in \eqref{refset}, for example, from the set of all possible labelings, $\{0,1\}^n$ (a \emph{Bernoulli trial}, \citealp{imbens2015causal}). 
Indeed, if the standard  randomization scheme is used, i.e., the forced balance trial, then the smallest possible \emph{p}-value that can be obtained is $1/N$, due to the discreteness of the \emph{p}-value. If $n=8$ for example, then $1/N=1/70$. This means that if the significance level is $\alpha=0.01$ for instance, we have a power of 0 to reject $H_0$. 
Such small $\alpha$ are often used nowadays, for example due to multiple testing. The discreteness of the permutation \emph{p}-value is a well-known downside of permutation-based tests \citep{berger2000pros}.
If we take $\mathcal{W}=\{0,1\}^n$, however, then $|\mathcal{W}|=2^8$, so that the smallest possible \emph{p}-value is $1/2^8=1/256$. If $1/256\leq \alpha<1/70$, this means a uniform improvement in power over the standard randomization test.
Note that there only is a power improvement for very small $\alpha$; for larger $\alpha$ the Bernoulli trial has less power than the forced balance trial.

Under $H_0$, if $\alpha$ is a multiple of $1/2^n$, the test with $\mathcal{W}=\{0,1\}^n$ rejects with probability exactly $\alpha$. Otherwise the test rejects with probability less than $\alpha$ under $H_0$.
For  $\mathcal{W}=\{0,1\}^n$,  to our knowledge it is not known what the optimal choice of $T$ is for testing an additive treatment effect.
In Section \ref{secsim} we will take
 \begin{equation} \label{te2}
T(W,Y)=\sum_{\{i: W_i=1\}}(Y_i-\overline{Y}) - \sum_{\{i: W_i=0\}}(Y_i-\overline{Y}),
\end{equation}
where $\overline{Y}=n^{-1}(Y_1+...+Y_n)$.
Using this  test statistic ensures that under $H_0$, the expected value of $T(W,Y)$ does not depend on the random labelling $W$. 
In Appendix \ref{Acovb} we show how the Bernoulli trial can be modified to enforce covariate balancing.

That it is possible to take $\mathcal{W}=\{0,1\}^n$ has been noted by several authors \citep{pocock1979allocation,kalish1985treatment,lachin1988properties,wei1988properties,suresh2011overview,imbens2015causal,rosenberger2019randomization}. They do not recommend this approach, but merely mention it as a possibility, while focusing on more common randomization schemes. Their main argument against taking $\mathcal{W}=\{0,1\}^n$ is that it leads to less power than the usual approach of restricted randomization  \citep[][p.188]{pocock1979allocation}. This is true when $\alpha$ is large enough and it is then better use the forced balance approach.  When $\alpha$ is rather small, however, that test has 0 power, while the Bernoulli trial may have substantial power.
The idea that using $\mathcal{W}=\{0,1\}^n$ leads to higher-resolution \emph{p}-values, has not been mentioned before to our knowledge.
Nowadays, the use of large multiple testing corrections is more common than in the past, so higher-resolution, exact \emph{p}-values can  clearly be of interest.

Suppose we use $\mathcal{W}=\{0,1\}^n$. Then, if we happen to draw $W=(0,...,0)$ or $W=(1,...,1)$,  the value of the statistic \eqref{te2} is $0$ and we can have no hope of rejecting $H_0$ (if $\alpha=0.05$). Hence we might exclude $(0,...,0)$ and  $(1,...,1)$, and perhaps more elements, from $\mathcal{W}$. 
In any case, if $\alpha<1/N$, it can be useful to consider a design with $|\mathcal{W}|$ larger than $N$. 
Note that  in practice, we should choose $\mathcal{W}$ before administering the treatments. Once the treatments have been given, we cannot change our minds about $\mathcal{W}$.
The randomization test that uses all patterns from  $\{0,1\}^n$ except $(0,...,0)$ and  $(1,...,1)$ is further studied with simulations in Section \ref{secsim}.

\subsection{Randomization testing under a random sampling model}

For completeness we note the following, but it can be skipped at a first read.
Existing permutation tests based on random sampling from populations rely on a group structure. However, in some cases, we can use an alternative approach to avoid the requirement of a group structure also in this setting. This is a novel idea, to our knowledge (and arguably falls outside the categories of Figure \ref{venndiagram}).
The approach is analogous to the test in Section \ref{highrescc}.
Suppose that we are comparing two populations, for example, a population of cases and a population of controls, or Englishmen and Frenchmen. 
Let $W$ have the uniform distribution on $\mathcal{W}=\{0,1\}^n$ or some subset thereof, as before. 
Then we could draw from the two populations as indicated by $W$, i.e., for every $1\leq i \leq n$, the $i$-th individual is drawn from  the first population if $W_i=0$ and from the second population if $W_i=1$. 
For $1\leq i \leq n$, let $Y_i$ be the observation for the $i$-th individual, for example his or her stature.
We can then perform a test exactly as  in Section \ref{highrescc}, using the test statistic \eqref{te2} and the \emph{p}-value \eqref{pvrtest}.

If we take $\mathcal{W}$ as in \eqref{refset}, then the test will be equivalent to a standard permutation test. For many other choices of $\mathcal{W}$, we obtain a novel type of test. 
If we take for instance  $\mathcal{W}=\{0,1\}^n$, then  the number of observations drawn from each population will be random, with only the total number of observations being fixed at $n$. 
In many situations this would be impractical, for example because there is only a limited, fixed number of cases. We will not pursue such tests further here.

\section{Empirical example} \label{secsim}

Here we  illustrate the idea in Section \ref{highrescc} with a simple simulation study. We considered two randomization tests:  a standard randomization test and an alternative test that provides higher-resolution \emph{p}-values, as discussed in Section \ref{highrescc}.
The setting was as in the example in Section \ref{highrescc}, with $n=8$. Every $Y_i$ was distributed as the absolute  value of a  $N(0,1)$ variable if $W_i=0$; if $W_i=1$ it had the same distribution, but with an increase in mean of $2$. Under the null hypothesis, the distribution of $Y_i$ does not depend on $W_i$.
The first test considered was the standard randomization test. This test uses $N=(n!)/((n/2)!(n/2)!)=70$ permutations. 
The second test was based on all  relabellings in $\{0,1\}^n$ excluding $(0,...,0)$ and $(1,...,1)$. Thus, this test used $2^n-2=254$ relabellings.
We  used the test statistic \eqref{te2}.
By Theorem \ref{dtest}, both tests control the type I error rate. Moreover, the first test is exact if $\alpha\in(0,1)$ is a multiple of $1/70$. The second test is exact if $\alpha$ is multiple of $1/254$.
It is important to note that the two tests are based on  different randomization schemes, i.e., on different data gathering mechanisms. In practice the type of test should already be decided upon before running the physical experiment.

 \begin{table}[!h] \normalsize  
\begin{center} 
\caption{Performance of an alternative to the standard  randomization test. Test 1 is the standard test, based on a forced balance procedure. Test 2 is the alternative test, based on more relabellings.  } 
    \begin{tabular}{ l  l l  l  l l l l   }    
\hline \\[-0.4cm]
 &  & \multicolumn{5}{l}{\qquad \qquad  \qquad \qquad \qquad  $\alpha$} \\ \cline{3-7} 
 &  test & 1/254 ($\approx .0039 $) & .005 & .01 & .02 &   .05      \\ \hline 
  size &  test 1   \qquad \quad & 0 & 0 & 0& .0122 & .0418       \\ 
   & test 2 \qquad \quad & .0034& .0034 & .0076 & .0190 & .0464     \\       \hline
    power &  test 1  \qquad \quad & 0& 0 & 0  & .9011 & .9725  \\ 
  &  test 2  \qquad \quad & .5443& .5443& .7027 & .8436 & .9316      \\      \hline

    \end{tabular}
\label{table:2tests}
\end{center}
\end{table}

In Table \ref{table:2tests}, for different values of the significance level $\alpha$, the estimated level and power of the two tests are shown. Every estimate in the table is based on $10^4$ repeated simulations. The regular randomization test had no power for $\alpha<1/70$, due to the fact that only 70 relabellings are available with this approach. Test 2, which is based on 254  relabellings, however, did have substantial power for $1/254 \leq \alpha<1/70$, as explained in Section \ref{highrescc}. 
In the table, the estimated size for $\alpha=1/254$ is $0.0034$, which is approximately the true size $1/254\approx 0.0039$. Note that for $\alpha=0.005$, the size and power are the same as for $\alpha=1/254$. The reason is the discreteness of the \emph{p}-value: $0.005$ lies between $1/254$ and $2/254$.

\section{Discussion}
In this paper, we have distinguished between two types of permutation-based tests: 
tests that fundamentally rely on a group structure and tests based on treatment randomization, which do not necessarily require a group structure.
 We have discussed that in settings where treatments are randomly assigned, it can be useful to consider a  randomization scheme that does not correspond to a group. In particular, this allows  obtaining higher-resolution exact \emph{p}-values than are possible with standard randomization tests. 
 This paper also provides the caveat that referring to the Lady Tasting Tea experiment as an example of a permutation test can be misleading, since the reasoning underlying this experiment is not based on a group structure.

The two types of tests between which we distinguish roughly correspond to respectively ``permutation tests" and ``randomization tests" in the sense of \citet{onghena2018randomization} and \citet{rosenberger2019randomization}. As we mentioned, the use of these terms has been rather inconsistent throughout the literature. For example,  \citet[][p.1]{edgington2007randomization} write that ``\emph{randomization tests}
are a subclass of statistical tests called \emph{permutation tests}", while \citet{onghena2018randomization} proposes to use the terms for strictly distinct classes of tests.
In any case, we propose to use the term ``randomization test" only when there is some form of treatment randomization. This is in line with  \citet{kempthorne1969behaviour},  \citet{edgington2007randomization}, \citet{onghena2018randomization} and \citet{rosenberger2019randomization}.

As mentioned in the introduction, it would not be straightforward to identify the first permutation test \citep{berry2014chronicle}. In any case, it is clear that, once the concepts of randomization of treatments and random sampling from populations had been established in the 1920's \citep{rubin1990comment, fisher1925statistical, neyman1928use}, the way was paved for the theoretical development of permutation-based tests. However, until the 1980's, there  was limited interest in permutation-based  procedures, due to lack of  access to fast computers.
Nowadays, the opposite is true  \citep{albajes2019voxel, hemerik2019permutation, rao2019permutation} 
and this article discusses important differences between two types of tests involving permutations or other reassignments.

\appendix 

\section{Enforced covariate balance in a Bernoulli trial}   \label{Acovb} 
Consider the Bernoulli trial discussed in Section \ref{highrescc}. Here we illustrate how we can use this type of  approach while also enforcing covariate balance. We  provide a simple example where there is one (binary) covariate, say sex, and we want the percentage of women to be the same in the two treatment groups. Generalizations can also be formulated.

Suppose that $n>0$ is divisible by 4 and that there are $n/2$ men and $n/2$ women. The goal is to achieve covariate balance, which means that we want the same fraction of women in both treatment groups. One way to proceed is as follows. Randomly and indepently allocate treatments to the $n/2$ women. There are $2^{n/2}$ ways to do this. Let $l$ be the number of times that the women received treatment I. Then, randomly allocate treatment I to $l$ of the men. In this way, covariate balance is achieved: $l$ women and $l$ men have received treatment I and $n/2-l$ men and $n/2-l$ women have received treatment II. In each treatment group, the percentage of women is the same (although there is a small probability that one treatment group is empty).

We now compute the total number of possible treatment allocation patterns. Given $l$, there are 
$$\binom{n/2}{l}\binom{n/2}{l}$$ possible patterns. Hence the total number of possible patterns is
\begin{equation} \label{nrpatterns2}
\sum_{l=0}^{n/2} \binom{n/2}{l}\binom{n/2}{l}.
\end{equation}
Correspondingly, the smallest possible \emph{p}-value is the inverse of this number.
Note that with the common design that enforces equally large treatment groups, the total number of possible treatment allocation patterns is   $$\binom{n/2}{n/4}\binom{n/2}{n/4}.$$
This is smaller than \eqref{nrpatterns2}, which means that the smallest possible \emph{p}-value is larger than for the Bernoulli trial.

\setlength{\bibsep}{3pt plus 0.3ex}  
\def\bibfont{\small}  

\bibliographystyle{biblstyle}
\bibliography{bibliography}

\end{document}